\newtheorem{theorem}{Theorem}
\newtheorem{lemma}[theorem]{Lemma}
\newtheorem{proposition}[theorem]{Proposition}
\newcommand{\sellname}{GSPRec}
\newcommand{\cmark}{\ding{51}} 
\newcommand{\xmark}{\ding{55}}
\begin{document}

\title{GSPRec: On Improving Item Representations in Graph Signal Processing for Collaborative Filtering}

\author{Ahmad Bin Rabiah}
\orcid{0009-0002-5991-4299}
\affiliation{
  \institution{University of California, San Diego}
  \city{La Jolla}
  \state{CA}
  \country{USA}
}
\email{abinrabiah@ucsd.edu}

\author{Julian McAuley}
\orcid{0000-0003-0955-7588}
\affiliation{
  \institution{University of California, San Diego}
  \city{La Jolla}
  \state{CA}
  \country{USA}
}
\email{jmcauley@eng.ucsd.edu}

\begin{abstract} 
Graph-based collaborative filtering methods act as low-pass filters in the spectral domain and discard the intermediate-frequency components where community-level user preferences reside. 
Existing GSP-based methods address the loss through sophisticated filter designs, yet derive item representations from the user-item interaction matrix alone. 
The interaction matrix captures which items each user interacted with, but not which items appear close together in users' interaction sequences.
We propose \sellname{}, a graph spectral collaborative filtering framework that produces richer item spectral representations by incorporating item-item proximity derived from user interaction ordering before spectral filtering. 
\sellname{} derives item-item edges from user interaction ordering and strengthens the edges through multi-hop diffusion with exponential decay. The unified graph topology incorporates the diffused edges alongside user-item interactions.
The resulting Laplacian exposes intermediate-frequency structure that a Gaussian bandpass filter selectively amplifies.
A low-pass filter retains broad popularity trends. 
Experiments on four real-world datasets show that \sellname{} outperforms all graph CF baselines, with average improvements of 5.12\% in NDCG@10.
Ablation studies establish that graph construction and filter design are coupled.
\sellname{} without the bandpass filter falls below every GSP baseline, whereas GSPRec without item-item proximity still surpasses baseline.
\end{abstract}

\begin{CCSXML}
<ccs2012>
   <concept>
       <concept_id>10002951.10003317.10003347.10003350</concept_id>
       <concept_desc>Information systems~Recommender systems</concept_desc>
       <concept_significance>500</concept_significance>
       </concept>
   <concept>
       <concept_id>10002950.10003624.10003633.10003645</concept_id>
       <concept_desc>Mathematics of computing~Spectra of graphs</concept_desc>
       <concept_significance>300</concept_significance>
       </concept>
   <concept>
       <concept_id>10002951.10003227.10003351.10003269</concept_id>
       <concept_desc>Information systems~Collaborative filtering</concept_desc>
       <concept_significance>500</concept_significance>
       </concept>
   <concept>
       <concept_id>10002951.10003317.10003331.10003271</concept_id>
       <concept_desc>Information systems~Personalization</concept_desc>
       <concept_significance>300</concept_significance>
       </concept>
 </ccs2012>
\end{CCSXML}

\ccsdesc[500]{Information systems~Recommender systems}
\ccsdesc[300]{Mathematics of computing~Spectra of graphs}
\ccsdesc[500]{Information systems~Collaborative filtering}
\ccsdesc[300]{Information systems~Personalization}

\keywords{Graph Signal Processing, Collaborative Filtering, Spectral Filtering}

\maketitle

\section{Introduction}
Graph-based collaborative filtering (CF) methods exploit high-order connectivity in the user-item interaction graph~\cite{wang2019neural,he2020lightgcn,liu2021interest}.
These methods propagate signals through multi-hop paths to reveal indirect relationships between users and items that direct interactions alone cannot capture. 
Nevertheless, existing methods, whether based on message passing or spectral filtering, derive item representations from the user-item interaction matrix alone~\cite{shen2021powerful,guo2023manipulating, liu2023personalized,xia2024hierarchical, xia2024frequency,qin2024polycf}. 
The interaction matrix captures which items each user interacted with, but not which items users interacted with together.

Graph signal processing (GSP) offers a spectral view of collaborative filtering~\cite{ortega2018graph, kruzick2017graph,shen2021powerful}. 
Prior work~\cite{shen2021powerful, dong2019learning} shows that neighborhood aggregation in CF methods acts as a low-pass filter.
Low-pass filtering amplifies items with broad, population-wide popularity and suppresses intermediate-frequency components where community-level user preferences reside~\cite{shen2021powerful}. Low-pass behavior has motivated subsequent work on filter design~\cite{shen2021powerful,guo2023manipulating, liu2023personalized,xia2024hierarchical,xia2024frequency}. Yet existing filters operate on item representations derived from the same source without making them richer. 
Enriching item representations with external signals consistently improves recommendation~\cite{he2016vbpr,wang2019kgat,wei2024llmrec,rabiah25bridging}. 
However, external signals require auxiliary data sources, whereas interaction ordering already exists within the interaction data.
The interaction matrix records that a user interacted with two items, but not whether those interactions occurred in close succession or far apart in the user's history. Incorporating ordering information into the graph topology produces item spectral representations that capture not just who interacted with each item, but which items were interacted with together across user histories. 
The enriched representations expose intermediate-frequency spectral structure absent from the interaction matrix alone.
\begin{figure*}[t] 
    \centering 
    \begin{subfigure}{0.31\textwidth} 
        \includegraphics[width=\linewidth]{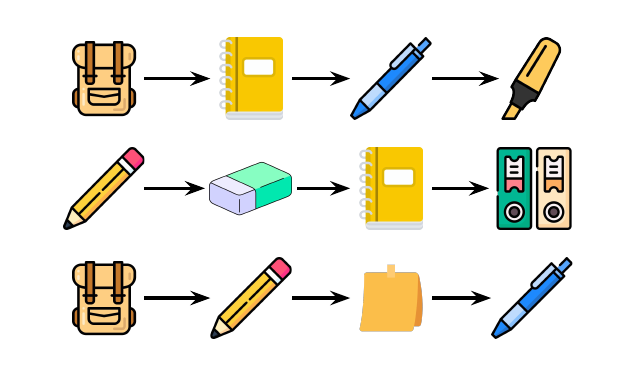}
        \Description{An ordered sequence of items from one user's interaction history, drawn as icons connected by arrows that indicate chronological order.}
        \caption{Sequential pattern}
        \label{fig:seq}
    \end{subfigure} \hfill 
    \begin{subfigure}{0.31\textwidth} 
        \includegraphics[width=\linewidth]{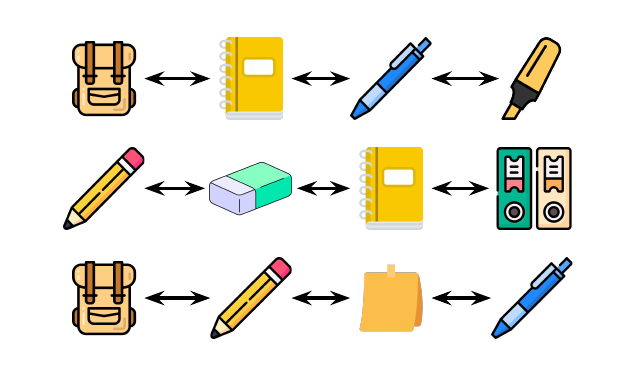}
        \Description{The same item sequence with double-headed arrows between adjacent items. Symmetrization removes interaction order and keeps bidirectional proximity between neighboring items.}
        \caption{Bidirectional pattern} 
        \label{fig:bidir}
    \end{subfigure} \hfill 
    \begin{subfigure}{0.31\textwidth} 
        \includegraphics[width=\linewidth]{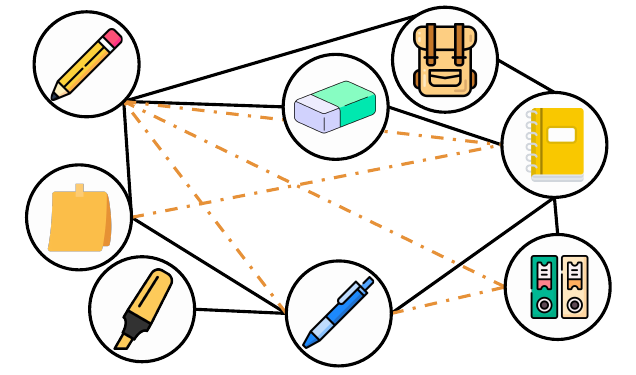}
        \Description{The unified item graph after multi-hop diffusion. Solid lines mark direct item-item edges and dashed lines mark weaker higher-order edges added by diffusion.}
        \caption{Item Graph with Multi-hop Diffusion}
        \label{fig:undir}
    \end{subfigure}
    \caption{Graph construction from user histories. (a) Ordered user interaction history. (b) Item proximity graph $\mathbf{S}'$, obtained by symmetrizing the sequence in (a), in which edges connect items that appear together in user histories regardless of interaction order. Items that appear close together in a history receive stronger connections than items that merely share users. (c) Unified item graph after multi-hop diffusion, where solid lines show direct edges and dashed lines show higher-order structure. Items that appear close together in a history receive stronger connections than items that merely share users.}
    \label{fig:seq2graph} 
\end{figure*}

We propose \textbf{\sellname{}}, a graph spectral framework for collaborative filtering that produces richer item spectral representations by incorporating item-item proximity derived from user interaction ordering before spectral filtering.
\sellname{} derives item-item edges from user interaction ordering and strengthens the edges through multi-hop diffusion with exponential decay~\cite{gasteiger2019diffusion}.
The unified graph topology incorporates the diffused edges, as Figure~\ref{fig:seq2graph} illustrates.
The resulting graph Laplacian yields item spectral representations that capture proximity structure from user histories alongside user-item interaction signals.
A Gaussian bandpass filter~\cite{hammond2011wavelets, tremblay2018design} selectively amplifies the intermediate-frequency structure the enriched representations expose. A low-pass filter retains broad popularity trends.
Graph construction and filter design in GSP-based CF are coupled. The intermediate-frequency structure that richer representations introduce is precisely what low-pass filtering suppresses.

Experiments on four real-world datasets demonstrate that \sellname{} consistently outperforms all graph CF baselines, with average improvements of 5.12\% in NDCG@10.
We summarize our contributions as follows:
\begin{itemize} 
    \item We propose a graph construction method that produces richer item spectral representations by incorporating item-item proximity derived from user interaction ordering, introducing intermediate-frequency spectral structure absent from existing GSP-based CF graphs. To our knowledge, \sellname{} is the first method to incorporate ordering-derived proximity into item spectral representations in GSP-based CF.
    \item We design a Gaussian bandpass filter that selectively amplifies the intermediate-frequency structure richer item representations expose. It outperforms all existing GSP methods even without the new graph construction.
    \item We show that graph construction and filter design are coupled in GSP-based CF. Enriched item representations degrade performance in the absence of the bandpass filter, while joint design of both components achieves state-of-the-art results across all datasets and metrics.
    \item We provide theoretical analysis establishing convergence of the diffusion process and validity of the unified Laplacian.
\end{itemize}
\section{Background \& Related Work}

\subsection{Graph Signal Processing Background} 
In a user-item graph, popular items produce globally smooth signals that correspond to low-frequency spectral components~\cite{shen2021powerful}. 
Community-specific preferences vary between user groups while remaining consistent within them~\cite{von2007tutorial,newman2006finding}. 
Such patterns occupy the intermediate-frequency band~\cite{shen2021powerful,xia2024hierarchical}.
High-frequency components reflect noise and sparse outlier behavior~\cite{ortega2018graph,shuman2013emerging}. 
The graph Laplacian makes the spectral properties precise. 
Formally, let $\mathbf{x} \in \mathbb{R}^{|\mathcal{V}|}$ denote a graph signal defined on $\mathcal{G} = (\mathcal{V}, \mathcal{E})$. 
The normalized Laplacian $\mathbf{L} = \mathbf{I} - \mathbf{D}^{-1/2}\mathbf{A}\mathbf{D}^{-1/2}$ is a real symmetric positive semidefinite matrix~\cite{chung1997spectral}. 
It admits the eigendecomposition $\mathbf{L} = \mathbf{U}\boldsymbol{\Lambda}\mathbf{U}^T$, where $\mathbf{U}$ is the orthonormal eigenvector matrix and $\boldsymbol{\Lambda} = \mathrm{diag}(\lambda_1, \ldots, \lambda_n)$ with $0 \leq \lambda_1 \leq \ldots \leq \lambda_n \leq 2$. 
Spectral filtering transforms $\mathbf{x}$ via a frequency response function $g$~\cite{ortega2018graph}:
\begin{equation}
    \mathbf{x}' = \mathbf{U} \, g(\boldsymbol{\Lambda}) \, 
    \mathbf{U}^T \mathbf{x}, \quad \text{where } 
    g(\boldsymbol{\Lambda}) = \mathrm{diag}(g(\lambda_1), 
    \ldots, g(\lambda_n))
\end{equation}
The choice of $g$ determines which preference signals are preserved or discarded in the recommendation process.

\subsection{Graph-Based Collaborative Filtering} 
Message passing methods such as NGCF~\cite{wang2019neural}, LightGCN~\cite{he2020lightgcn}, LR-GCCF~\cite{chen2020revisiting}, and IMP-GCN~\cite{liu2021interest} learn user and item representations by aggregating neighborhood information over the user-item bipartite graph. 
GF-CF~\cite{shen2021powerful} shows that neighborhood aggregation in message passing methods is equivalent to low-pass filtering in the spectral domain. Low-pass filtering therefore suppresses intermediate-frequency components that encode community-level user preferences.
Recent work addresses over-smoothing~\cite{chen2020measuring}, popularity bias~\cite{zhang2023mitigating}, and efficiency~\cite{liang2024lightweight} but operates within the same message passing paradigm. 
\sellname{} departs from the message passing paradigm by operating directly on the graph Laplacian spectrum, incorporating ordering-derived item-item proximity into the graph before filtering.
\begin{figure*}[t]
    \centering
    \includegraphics[width=\textwidth]{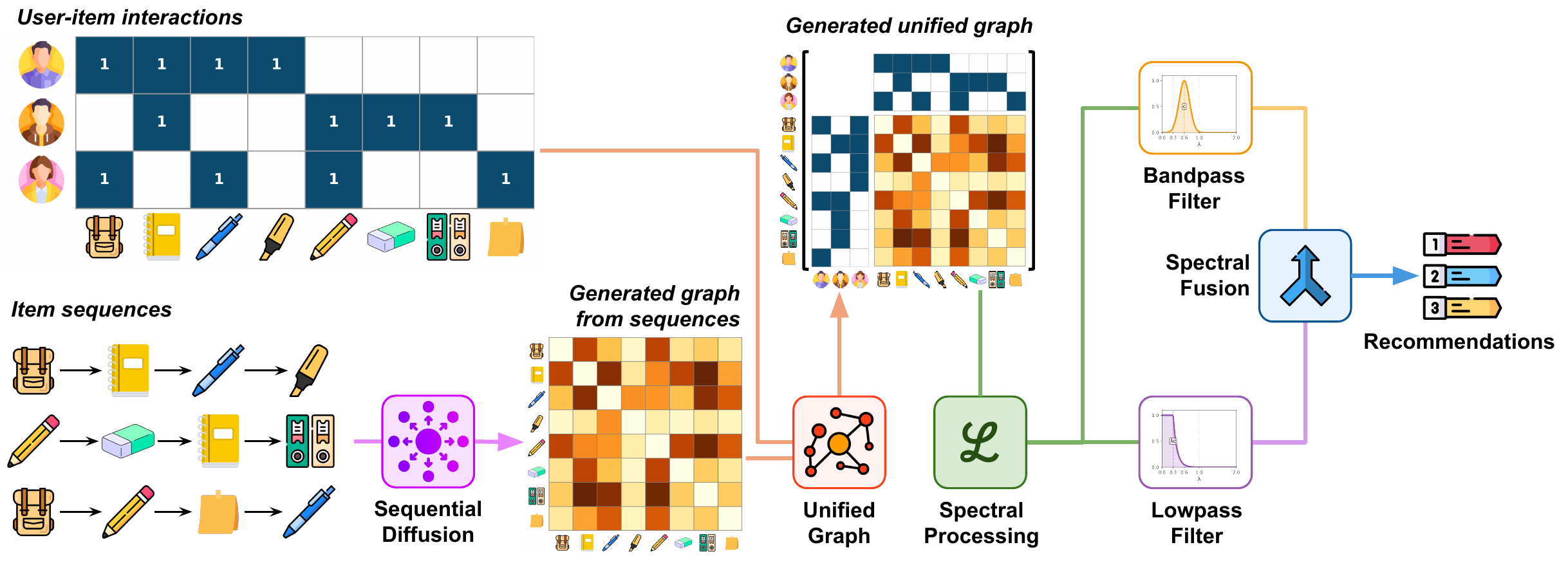}
    \Description{Pipeline diagram of GSPRec. User-item interactions and item sequences enter graph construction. Sequential diffusion builds an item graph from the sequences, which merges with the interactions into a unified graph and its Laplacian. A bandpass filter and a lowpass filter process the spectrum, and spectral fusion combines both outputs into recommendations.}
    \caption{Overview of \sellname{}. Starting from user-item interactions and user histories, \sellname{} constructs a unified graph by incorporating ordering-derived item-item proximity into the graph topology via multi-hop diffusion with exponential decay~(Eq.~\ref{eq:diffusion}). A Gaussian bandpass filter targets the intermediate-frequency structure the unified graph introduces. A complementary low-pass filter retains broad popularity trends. The final recommendation score fuses both signals.}
    \label{fig:overview}
\end{figure*}
\subsection{GSP-based Collaborative Filtering}
\begin{table}[t]
    \centering
    \scriptsize
    \setlength{\tabcolsep}{4pt}
    \caption{\sellname{} is the only general CF method that incorporates ordering-derived item-item proximity with intermediate-frequency spectral filtering.}
    \label{tab:method_comparison}
    \begin{tabular}{lccccc}
        \toprule
        \textbf{Method} & \textbf{Task} & \textbf{Spectral} & \textbf{History} & \textbf{Graph} & \textbf{Mid-} \\
        & & \textbf{Filter} & \textbf{Prox.} & \textbf{Structure} & \textbf{Freq} \\
        \midrule
        \multicolumn{6}{l}{\textit{Message Passing GCNs:}} \\
        NGCF~\cite{wang2019neural}        & General & \xmark & \xmark & User-Item & \xmark \\
        LightGCN~\cite{he2020lightgcn}    & General & \xmark & \xmark & User-Item & \xmark \\
        LR-GCCF~\cite{chen2020revisiting} & General & \xmark & \xmark & User-Item & \xmark \\
        IMP-GCN~\cite{liu2021interest}    & General & \xmark & \xmark & User-Item & \xmark \\
        \midrule
        \multicolumn{6}{l}{\textit{Spectral Filtering Methods (GSP-based):}} \\
        GF-CF~\cite{shen2021powerful}     & General & LP& \xmark & User-Item & \xmark \\
        JGCF~\cite{guo2023manipulating}   & General & HP& \xmark & User-Item & \xmark \\
        PGSP~\cite{liu2023personalized}   & General & LP & \xmark & User-Item & \xmark \\
        HiGSP~\cite{xia2024hierarchical}  & General & LP & \xmark & User-Item & \xmark \\
        FaGSP~\cite{xia2024frequency}     & General & LP & \xmark & User-Item & \xmark \\
        PolyCF~\cite{qin2024polycf} & General & LP &  \xmark & User-Item & \xmark \\
        \midrule
        \textbf{\sellname{} (Ours)} & \textbf{General} & \textbf{BP+LP} & \cmark & \textbf{UI+Hist} & \cmark \\
        \bottomrule
    \end{tabular}
    \vspace{-10pt}
\end{table}
GSP-based recommenders treat user-item interactions as graph signals and apply spectral filters to extract preference information~\cite{huang2017collaborative,shen2021powerful}. 
GF-CF~\cite{shen2021powerful} establishes a unified spectral framework for collaborative filtering and shows that existing CF methods correspond to different low-pass filters applied to the user-item interaction graph. 
Subsequent methods propose a range of filter designs: PGSP~\cite{liu2023personalized} introduces personalized graph signals with mixed-frequency filtering; HiGSP~\cite{xia2024hierarchical} designs hierarchical filters at cluster and global levels; FaGSP~\cite{xia2024frequency} combines cascaded high-pass and parallel low-pass modules. 
JGCF~\cite{guo2023manipulating} proposes Jacobi polynomial-based filters for joint low- and high-frequency modeling. 
PolyCF~\cite{qin2024polycf} learns optimal polynomial filter shapes through joint graph optimization and pairwise ranking objectives. 
All existing GSP-based recommenders derive item representations from the user-item interaction matrix alone, as summarized in Table~\ref{tab:method_comparison}. 
\sellname{} augments the interaction matrix with ordering-derived item-item proximity via multi-hop diffusion. The enriched graph exposes intermediate-frequency spectral structure that existing representations do not contain, and a Gaussian bandpass filter~\cite{hammond2011wavelets,tremblay2018design} selectively amplifies the exposed structure. No prior GSP-based CF method enriches item spectral representations with interaction ordering.
\section{\sellname{}}
\sellname{} addresses collaborative filtering through two components: (1) a graph construction procedure that incorporates ordering-derived item-item proximity into the item spectral representations, and (2) a spectral filtering framework that applies complementary bandpass and low-pass filters to the resulting Laplacian.

\subsection{Problem Formulation}
Let $\mathcal{U} = \{u_1, \ldots, u_m\}$ and  $\mathcal{I} = \{i_1, \ldots, i_n\}$ denote the user  and item sets. 
Historical interactions are recorded as  timestamped triplets $\mathcal{D} = \{(u, i, t) \mid  u \in \mathcal{U}, i \in \mathcal{I}, t \in  \mathbb{R}^+\}$. 
The binary user-item interaction  matrix $\mathbf{X} \in \{0,1\}^{m \times n}$ is  derived from $\mathcal{D}$. 
An entry $x_{ui} = 1$ if  user $u$ has interacted with item $i$, and $x_{ui} = 0$  otherwise. 
For each user $u$, timestamps in $\mathcal{D}$  induce a chronologically ordered item sequence  $\mathcal{S}_u = [i_1, i_2, \ldots]$ with  $t_1 < t_2 < \cdots$. 
Our task is collaborative  filtering: given $\mathbf{X}$ and $\{\mathcal{S}_u\}$,  predict unobserved user-item interactions.

\subsection{Recommendation Graph Construction}
\label{sec:graph_construction}
We construct a unified bipartite graph that encodes both user-item interactions and item-item proximity derived from user interaction ordering. 
We derive $\mathbf{S}'$ from user interaction ordering, apply multi-hop diffusion to obtain $\mathbf{S}^{(d)}$, normalize to $\tilde{\mathbf{S}}$, and integrate with $\mathbf{X}$ to form the unified Laplacian $\mathbf{L}$.

\subsubsection{Initial Item-Item Graph ($\mathbf{S}'$):} 
We construct a global directed transition matrix $\mathbf{S} \in \{0,1\}^{n \times n}$, where $s_{ij} = 1$ if item $i$ directly precedes item $j$ in any user's interaction sequence. 
For collaborative filtering, the relevant signal is not transition order but item-item proximity, that is, which items appear close together in user interaction ordering.
We therefore form the undirected version $\mathbf{S}' \in \{0,1\}^{n \times n}$, where an edge $(i,j)$ exists if $s_{ij} = 1$ or $s_{ji} = 1$. 
The proximity signal is exactly the component missing from $\mathbf{X}^T\mathbf{X}$. While $\mathbf{X}$ captures that a user interacted with items $i$ and $j$ regardless of timing, $\mathbf{S}'$ captures which items appear close together in user interaction ordering~\cite{sarwar2001item}. 
In addition, spectral filtering requires a symmetric Laplacian to guarantee real eigenvalues and orthogonal eigenvectors~\cite{chung1997spectral, ortega2018graph}.
For example, consider a user sequence \texttt{backpack}$\rightarrow$\texttt{notebook} $\rightarrow$\texttt{pen}. 
Symmetrization creates undirected edges (\texttt{backpack}, \texttt{notebook}) and (\texttt{notebook}, \texttt{pen}). 
The second-order diffusion step (Eq.~\ref{eq:diffusion}) then adds a weighted edge (\texttt{backpack}, \texttt{pen}), which encodes the two-hop item-item proximity between \texttt{backpack} and \texttt{pen} in the user interaction ordering.

\subsubsection{Multi-Hop Diffusion ($\mathbf{S}^{(d)}$) and Normalization ($\tilde{\mathbf{S}}$):} 
\begin{figure}
    \centering
    \includegraphics[width=0.8\linewidth,keepaspectratio]{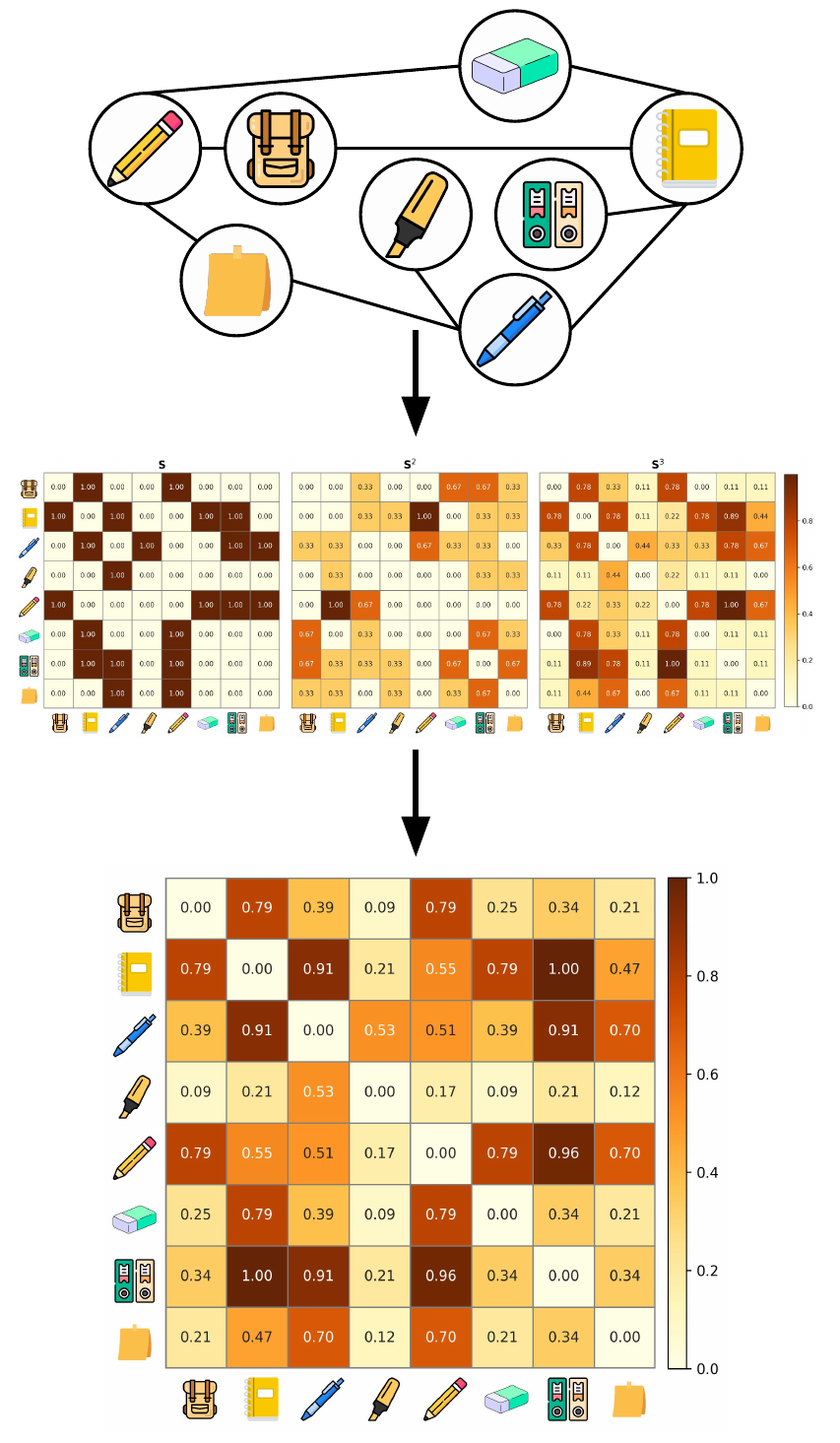}
    \Description{Three matrices illustrating diffusion. Top, the binary adjacency matrix of direct item-item edges. Middle, matrix powers that add higher-order proximity. Bottom, the diffused matrix with exponential decay, in which darker cells mark item pairs that appear closer together in user histories.}
    \caption{Multi-hop diffusion encodes item-item proximity derived from user interaction ordering as edge weight. Top: direct item-item edges $\mathbf{S}'$. Middle: higher-order proximity via $(\mathbf{S}')^k$. Bottom: $\mathbf{S}^{(3)}$ with exponential decay, where darker cells indicate stronger proximity. The construction discards interaction order and preserves item-item proximity as weight magnitude.}
    \label{fig:diffusion-wrap}
    \vspace{-10pt}
\end{figure}
Direct edges in $\mathbf{S}'$ capture only immediate item-item adjacencies.
Items two or more hops apart in $\mathbf{S}'$ also carry meaningful proximity signal.
In the running example, \texttt{backpack} and \texttt{pen} are not directly connected in $\mathbf{S}'$ but are two hops apart via \texttt{notebook}. 
To encode multi-hop proximity, we apply diffusion:
\begin{equation} 
    \mathbf{S}^{(d)} = \sum_{k=1}^{d} \alpha^{k-1} (\mathbf{S}')^k \label{eq:diffusion} 
\end{equation}
Here $\alpha \in (0,1)$ assigns weight $\alpha^{k-1}$ to paths of length $k$, so longer paths contribute exponentially less. 
$d$ controls the maximum diffusion depth. 
Since $\mathbf{S}'$ is symmetric, each power $(\mathbf{S}')^k$ is symmetric, and $\mathbf{S}^{(d)}$ is therefore symmetric. 
Each entry $S^{(d)}_{ij}$ accumulates weighted path counts between items $i$ and $j$ at all lengths up to $d$. 
Items appearing closer together in user interaction ordering receive larger entries.
Figure~\ref{fig:diffusion-wrap} illustrates how proximity weight decays with interaction distance.
The diffusion process is analogous to graph diffusion kernels for similarity definition~\cite{gasteiger2019diffusion}.
\begin{lemma}[Diffusion Convergence] 
    \label{lemma:diffusion_convergence} 
    The diffusion process in Eq.~\ref{eq:diffusion} converges as $d \to \infty$ if $\alpha < 1/\rho(\mathbf{S}')$, where $\rho(\cdot)$ denotes the spectral radius. 
\end{lemma}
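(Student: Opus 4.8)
The plan is to exploit the symmetry of $\mathbf{S}'$ to reduce the matrix power series to a collection of scalar geometric series, one per eigenvalue. Since $\mathbf{S}'$ is real and symmetric (it is the symmetrization of $\mathbf{S}$), the spectral theorem gives an orthogonal decomposition $\mathbf{S}' = \mathbf{Q}\boldsymbol{\Lambda}\mathbf{Q}^T$ with $\mathbf{Q}^T\mathbf{Q} = \mathbf{I}$ and $\boldsymbol{\Lambda} = \mathrm{diag}(\lambda_1,\dots,\lambda_n)$, $\lambda_i \in \mathbb{R}$. Then $(\mathbf{S}')^k = \mathbf{Q}\boldsymbol{\Lambda}^k\mathbf{Q}^T$ for every $k$, so the partial sums satisfy
\begin{equation}
    \mathbf{S}^{(d)} = \mathbf{Q}\Big(\sum_{k=1}^{d}\alpha^{k-1}\boldsymbol{\Lambda}^k\Big)\mathbf{Q}^T ,
\end{equation}
and, because $\mathbf{Q}$ is a fixed invertible matrix, convergence of $\mathbf{S}^{(d)}$ as $d\to\infty$ is equivalent to convergence of each diagonal entry $\sum_{k=1}^{d}\alpha^{k-1}\lambda_i^k = \lambda_i\sum_{j=0}^{d-1}(\alpha\lambda_i)^{j}$.

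Next I would invoke the geometric series test: $\sum_{j=0}^{d-1}(\alpha\lambda_i)^j$ converges as $d\to\infty$ precisely when $|\alpha\lambda_i| < 1$. Since $\alpha \in (0,1)$ is positive, this reads $|\lambda_i| < 1/\alpha$, and it holds simultaneously for all $i$ exactly when $\max_i|\lambda_i| = \rho(\mathbf{S}') < 1/\alpha$, i.e. $\alpha < 1/\rho(\mathbf{S}')$ — the hypothesis. Under this hypothesis each scalar series converges to $\lambda_i/(1-\alpha\lambda_i)$, hence
\begin{equation}
    \mathbf{S}^{(d)} \xrightarrow[d\to\infty]{} \mathbf{Q}\,\mathrm{diag}\!\Big(\tfrac{\lambda_i}{1-\alpha\lambda_i}\Big)\mathbf{Q}^T = \mathbf{S}'(\mathbf{I}-\alpha\mathbf{S}')^{-1} ,
\end{equation}
which simultaneously certifies that the limit is well defined, the inverse existing because no eigenvalue of $\alpha\mathbf{S}'$ equals $1$. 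A shorter alternative avoids the eigendecomposition entirely: for symmetric $\mathbf{S}'$ the spectral norm equals the spectral radius, so $\|\alpha^{k-1}(\mathbf{S}')^k\|_2 = \alpha^{k-1}\rho(\mathbf{S}')^k$, and $\sum_k \alpha^{k-1}\rho(\mathbf{S}')^k$ is a convergent geometric series under the hypothesis; absolute convergence in the complete space of matrices then yields convergence of $\mathbf{S}^{(d)}$. I would likely present the eigenvalue version in the main text, since it also delivers the closed form.

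There is no genuine obstacle here — this is a textbook Neumann-series fact — but two points deserve care. First, the reduction to scalar series uses symmetry of $\mathbf{S}'$ in an essential way (diagonalizability with a single similarity transform); for a directed, asymmetric transition matrix one would instead appeal to Gelfand's formula $\rho(\mathbf{A}) = \lim_k\|\mathbf{A}^k\|^{1/k}$, so the argument as written is specific to the symmetrized construction. Second, the inequality must be strict: if some $|\lambda_i| = 1/\alpha$ the corresponding scalar series diverges or merely oscillates, so $\alpha < 1/\rho(\mathbf{S}')$ cannot be relaxed to $\leq$. I would also note the degenerate case $\rho(\mathbf{S}') = 0$, which for a symmetric matrix forces $\mathbf{S}' = \mathbf{0}$ and makes the sum trivially finite for every $\alpha$, consistent with reading $1/0$ as $+\infty$.
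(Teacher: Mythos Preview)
Your proof is correct, but it takes a somewhat different route from the paper's. The paper simply factors the series as $\sum_{k=1}^\infty \alpha^{k-1}(\mathbf{S}')^k = \mathbf{S}'\sum_{j=0}^\infty(\alpha\mathbf{S}')^j$ and then cites the standard Neumann-series criterion (the matrix geometric series $\sum_j M^j$ converges iff $\rho(M)<1$) as a black box, concluding from $\rho(\alpha\mathbf{S}')=\alpha\rho(\mathbf{S}')$. You instead diagonalize $\mathbf{S}'$ via the spectral theorem and reduce the question to $n$ scalar geometric series, one per eigenvalue. What this buys you: your argument is fully self-contained (you essentially \emph{prove} the Neumann criterion in the symmetric case rather than citing it), and the closed form $\mathbf{S}'(\mathbf{I}-\alpha\mathbf{S}')^{-1}$ drops out transparently. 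What the paper's route buys: it is a couple of lines long and, since the Neumann criterion holds for arbitrary square matrices, does not actually need symmetry of $\mathbf{S}'$ --- a point you correctly flag yourself when noting that your eigendecomposition step would have to be replaced by Gelfand's formula in the asymmetric case. Your spectral-norm alternative is closer in spirit to the paper's argument. Either presentation is fine; yours is more informative but longer.
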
 
\begin{proof} 
    The series $\mathbf{S}^{(\infty)} = \mathbf{S}'\sum_{j=0}^{\infty}(\alpha\mathbf{S}')^j$ converges when $\rho(\alpha\mathbf{S}') = \alpha\rho(\mathbf{S}') < 1$, requiring $\alpha < 1/\rho(\mathbf{S}')$. 
\end{proof}
In practice, we estimate $\rho(\mathbf{S}')$ via power iteration and set $\alpha$ to satisfy $\alpha < 1/\rho(\mathbf{S}')$. 
We symmetrically normalize $\mathbf{S}^{(d)}$ to scale its weights for stable integration into the unified graph Laplacian~\cite{chung1997spectral}:
\begin{equation} 
    \tilde{\mathbf{S}} = \mathbf{D}_S^{-1/2} \mathbf{S}^{(d)} \mathbf{D}_S^{-1/2}, \quad \text{where } \mathbf{D}_S = \text{diag}(\mathbf{S}^{(d)}\mathbf{1}_n) \label{eq:diffusion-norm} 
\end{equation}

\subsubsection{Interaction Normalization:}
We normalize the user-item interaction matrix $\mathbf{X}$ to account for varying user activity levels and item popularities: 
\begin{equation} 
    \tilde{\mathbf{X}}_U = \mathbf{D}_U^{-1/2}\mathbf{X}, \quad \tilde{\mathbf{X}}_I = \mathbf{X}\mathbf{D}_I^{-1/2} \label{eq:interaction-norm} 
\end{equation} 
where $\mathbf{D}_U = \text{diag}(\mathbf{X}\mathbf{1}_n)$ and $\mathbf{D}_I = \text{diag}(\mathbf{X}^T\mathbf{1}_m)$. 
We use the binary matrix $\mathbf{X}$ directly in $\mathbf{A}$ because the Laplacian normalization in Eq.~\ref{eq:laplacian} accounts for degree differences during spectral filtering. The spectral filters in Section~\ref{sec:filtering} use $\tilde{\mathbf{X}}_U$ and $\tilde{\mathbf{X}}_I$.

\subsubsection{Unified Graph Adjacency and Laplacian ($\mathbf{A}$, $\mathbf{L}$):} 
We integrate user-item interactions and the ordering-derived item-item graph into a single symmetric adjacency matrix:
\begin{equation} 
    \mathbf{A} = 
    \begin{bmatrix} 
        \mathbf{0}_{m \times m} & \mathbf{X} \\ \mathbf{X}^T & \tilde{\mathbf{S}} 
    \end{bmatrix} 
    \label{eq:adjacency} 
\end{equation} 
The $\mathbf{0}_{m \times m}$ block reflects the absence of direct user-user edges. 
Explicit user-user edges provide no additional information, since user-user similarity is fully determined by $\mathbf{X}$ and is incorporated in the low-pass branch through the similarity term $\mathbf{C}_U$, whereas $\tilde{\mathbf{S}}$ encodes a signal that $\mathbf{X}$ does not contain.
The $\mathbf{X}$ block encodes user-item interactions. 
$\tilde{\mathbf{S}}$ encodes item-item proximity derived from user interaction ordering.
From $\mathbf{A}$, we compute the symmetric normalized Laplacian: 

\begin{equation} 
    \mathbf{L} = \mathbf{I} - \mathbf{D}^{-1/2} \mathbf{A} \mathbf{D}^{-1/2}, \quad \text{where } \mathbf{D} = \text{diag}(\mathbf{A}\mathbf{1}) 
\label{eq:laplacian} 
\end{equation}
\begin{proposition}[Laplacian Validity] 
    \label{prop:laplacian_validity} 
    The Laplacian $\mathbf{L}$ constructed from $\mathbf{A}$ is symmetric, positive semidefinite, and has real eigenvalues in $[0, 2]$. 
\end{proposition}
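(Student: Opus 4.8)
The plan is to establish the three properties one at a time, using only that $\mathbf{A}$ is symmetric and entrywise non-negative (the latter inherited from $\mathbf{X}\in\{0,1\}^{m\times n}$ and from $\tilde{\mathbf{S}}$, whose entries are non-negative by Eq.~\ref{eq:diffusion} and Eq.~\ref{eq:diffusion-norm}), plus the mild assumption that every row of $\mathbf{A}$ has positive sum so that $\mathbf{D}^{-1/2}$ is well defined; if some node were isolated one adopts the usual convention $D_{ii}^{-1/2}:=0$, which merely leaves a corresponding eigenvalue equal to $1$.

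Symmetry is immediate: $\mathbf{D}$ is diagonal, so $\mathbf{D}^{-1/2}$ is symmetric, and $(\mathbf{D}^{-1/2}\mathbf{A}\mathbf{D}^{-1/2})^{T}=\mathbf{D}^{-1/2}\mathbf{A}^{T}\mathbf{D}^{-1/2}=\mathbf{D}^{-1/2}\mathbf{A}\mathbf{D}^{-1/2}$, hence $\mathbf{L}=\mathbf{I}-\mathbf{D}^{-1/2}\mathbf{A}\mathbf{D}^{-1/2}$ is symmetric as a difference of symmetric matrices; the spectral theorem then yields real eigenvalues. For positive semidefiniteness I would evaluate the Rayleigh quotient: writing $\mathbf{z}=\mathbf{D}^{-1/2}\mathbf{x}$ and using $D_{ii}=\sum_j A_{ij}$ (the convention in Eq.~\ref{eq:laplacian}) together with $\mathbf{A}=\mathbf{A}^{T}$, one obtains the standard identity
\[
\mathbf{x}^{T}\mathbf{L}\mathbf{x}=\sum_i D_{ii}z_i^{2}-\sum_{i,j}A_{ij}z_iz_j=\tfrac12\sum_{i,j}A_{ij}(z_i-z_j)^{2}\;\ge\;0,
\]
since $A_{ij}\ge 0$; the self-loops on item nodes (present because $(\mathbf{S}')^{2}$ has nonzero diagonal) contribute $A_{ii}(z_i-z_i)^2=0$ and are harmless. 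This forces every eigenvalue into $[0,\infty)$.

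For the upper bound I would apply the same manipulation to the signless operator $2\mathbf{I}-\mathbf{L}=\mathbf{I}+\mathbf{D}^{-1/2}\mathbf{A}\mathbf{D}^{-1/2}$, obtaining
\[
\mathbf{x}^{T}(2\mathbf{I}-\mathbf{L})\mathbf{x}=\sum_i D_{ii}z_i^{2}+\sum_{i,j}A_{ij}z_iz_j=\tfrac12\sum_{i,j}A_{ij}(z_i+z_j)^{2}\;\ge\;0,
\]
again because $\mathbf{A}$ is non-negative. Hence $\mathbf{x}^{T}\mathbf{L}\mathbf{x}\le 2\|\mathbf{x}\|^{2}$ for all $\mathbf{x}$, so every eigenvalue is at most $2$; combined with the previous step, the spectrum lies in $[0,2]$.

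I do not expect a genuine obstacle: this is the classical argument for the normalized Laplacian (see \cite{chung1997spectral}). The only points needing a sentence of care are the treatment of possibly zero degrees (ruled out in practice, since every retained user and item participates in at least one edge of $\mathbf{A}$) and the self-loops introduced by the diagonal of $\tilde{\mathbf{S}}$, which, as noted, vanish in the $(z_i-z_j)^2$ form and only add non-negative terms to the $(z_i+z_j)^2$ form. As an alternative route one could note that $\mathbf{L}$ is similar to the random-walk Laplacian $\mathbf{I}-\mathbf{D}^{-1}\mathbf{A}$ via the conjugation $\mathbf{x}\mapsto\mathbf{D}^{-1/2}\mathbf{x}$ and bound the spectral radius of the row-stochastic-type matrix $\mathbf{D}^{-1}\mathbf{A}$ by $1$, but the Rayleigh-quotient argument above is the most direct.
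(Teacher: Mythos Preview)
Your proof is correct. The paper takes a shorter route: it first isolates as a lemma that $\mathbf{A}$ is symmetric and entrywise non-negative (checking the block structure and tracing non-negativity through $\mathbf{S}'\to\mathbf{S}^{(d)}\to\tilde{\mathbf{S}}$), and then simply invokes Chung's theorem \cite{chung1997spectral} as a black box to conclude that the normalized Laplacian of any such matrix is symmetric, positive semidefinite, with spectrum in $[0,2]$. You instead unpack that theorem in place via the Rayleigh-quotient identities $\mathbf{x}^{T}\mathbf{L}\mathbf{x}=\tfrac12\sum A_{ij}(z_i-z_j)^2$ and $\mathbf{x}^{T}(2\mathbf{I}-\mathbf{L})\mathbf{x}=\tfrac12\sum A_{ij}(z_i+z_j)^2$, which is exactly how the cited result is proved. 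Your version is more self-contained and makes explicit why self-loops from the diagonal of $\tilde{\mathbf{S}}$ cause no trouble; the paper's version is terser and puts slightly more emphasis on verifying the hypotheses (symmetry and non-negativity of $\mathbf{A}$) before deferring to the literature. Both are sound and rest on the same two facts about $\mathbf{A}$.
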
 
\begin{proof}[Proof sketch] 
    $\tilde{\mathbf{S}}$ is symmetric by construction. $\mathbf{X}^T$ is the transpose of $\mathbf{X}$, so $\mathbf{A}$ is symmetric. All entries of $\mathbf{X}$ and $\tilde{\mathbf{S}}$ are non-negative, so $\mathbf{A}$ is non-negative. The standard properties of the normalized Laplacian~\cite{chung1997spectral} applied to this symmetric, non-negative $\mathbf{A}$ establish symmetry, positive semidefiniteness, and eigenvalues in $[0,2]$. 
\end{proof}
The unified Laplacian introduces intermediate-frequency spectral structure absent from the standard user-item graph.
The bandpass filter in Section~\ref{sec:filtering} targets the introduced structure.
Proposition~\ref{prop:laplacian_validity} guarantees bounded eigenvalues in $[0,2]$, ensuring the bandpass filter targets a well-defined spectral region across all datasets. 
Ordering-derived item-item edges connect items that frequently appear close together in user histories. 
The added edges create community-level structure in the item-item block of $\mathbf{A}$. 
The community structure reshapes the spectrum of $\mathbf{L}$ and exposes intermediate-frequency components absent from the standard user-item graph.
\begin{figure*}[t]
    \centering
    \includegraphics[width=.9\linewidth]{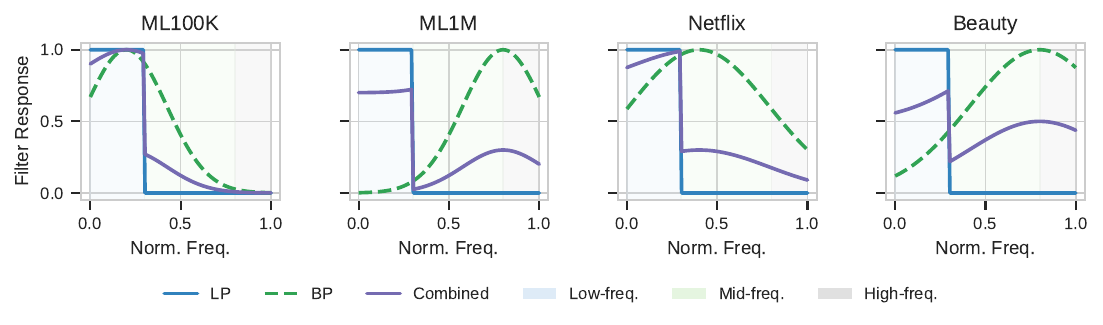}
    \Description{Line plots of filter response versus normalized frequency for ML100K, ML1M, Netflix, and Beauty. The lowpass curve peaks at zero frequency and decays, the bandpass curve peaks in the intermediate band, and the combined curve covers both regions. Shaded areas mark the low, mid, and high frequency bands.}
    \vspace{-10pt}
    \caption{Frequency response of the low-pass (LP), bandpass (BP), and fused filters across all four datasets. The bandpass filter selectively amplifies intermediate-frequency components associated with community-level user preferences. The fused filter captures complementary signals across the spectrum.}
    \label{fig:filter_response} 
\end{figure*}

\subsection{Spectral Filtering Framework}
\label{sec:filtering}
The unified Laplacian admits the eigendecomposition $\mathbf{L} = \mathbf{U}\boldsymbol{\Lambda} \mathbf{U}^T$, where $\boldsymbol{\Lambda} = \text{diag}(\lambda_1, \ldots, \lambda_{m+n})$ and $\mathbf{U} = [\mathbf{u}_1, \ldots, \mathbf{u}_{m+n}]$. 
Incorporating $\tilde{\mathbf{S}}$ into $\mathbf{A}$ reshapes the spectral domain of $\mathbf{L}$ relative to the standard user-item graph. 
In spectral graphs, intermediate-frequency components encode community structure, specifically signals that vary between groups but remain consistent within them~\cite{von2007tutorial,newman2006finding}. 
In the unified spectrum of $\mathbf{L}$, the intermediate-frequency components correspond to community-level user preferences that a Gaussian bandpass filter can target directly. 
A complementary low-pass filter retains broad popularity trends. 
Their outputs are fused to produce the final recommendation scores.

\subsubsection{Bandpass Filtering:} 
We compute a truncated eigendecomposition retaining the $r$ smallest eigenpairs of $\mathbf{L}$. 
We discard high-frequency components, which encode noise~\cite{shen2021powerful,ortega2018graph}. Let $\mathbf{U}_I \in \mathbb{R}^{n \times r}$ be the submatrix of $\mathbf{U}$ corresponding to the $n$ item nodes of $\mathbf{L}$. 
To motivate the filter design, consider decomposing a user preference signal over the eigenbasis of $\mathbf{L}$: $\mathbf{x}_u = \sum_i \hat{x}_i \mathbf{u}_i$, where $\hat{x}_i = \mathbf{u}_i^T\mathbf{x}_u$ are the graph Fourier coefficients. 
Figure~\ref{fig:filter_response} shows the distribution of $|\hat{x}_i|$ across eigenvalue bands for all four datasets: intermediate-frequency components consistently carry higher energy than low or high-frequency components. 
The energy concentration motivates a bandpass filter targeting the intermediate band. We tune the center $c$ and width $w$ per dataset via grid search on the validation set.
We design a Gaussian bandpass filter to target the intermediate-frequency band:
\begin{equation} 
    g_{\text{BP}}(\lambda) = \exp\left(-\frac{(\bar{\lambda} - c)^2}{w}\right), \quad \bar{\lambda} = \frac{\lambda - \lambda_{\min}}{\lambda_{\max} - \lambda_{\min}} \label{eq:bp-kernel} 
\end{equation}
where $c$ and $w$ control the center and width of the frequency band. 
$\lambda_{\min}$ and $\lambda_{\max}$ are the minimum and maximum eigenvalues of the truncated set, so $\bar{\lambda} \in [0,1]$. 
The filter achieves maximum response at $\bar{\lambda} = c$ and attenuates components with eigenvalues far from $c$~\cite{hammond2011wavelets,tremblay2018design}. 
Let $\mathbf{G}_{\text{BP}} = \text{diag}(g_{\text{BP}} (\lambda_1), \ldots, g_{\text{BP}}(\lambda_r))$. The bandpass filtered user-item representation is:
\begin{equation} 
    \mathbf{F}_{\text{BP}} = \mathbf{X}\mathbf{D}_I^{-1/2} \mathbf{U}_I\mathbf{G}_{\text{BP}}\mathbf{U}_I^T \mathbf{D}_I^{-1/2} 
    \label{eq:bandpass} 
\end{equation}

\subsubsection{Lowpass Filtering:}
The low-pass component captures broad popularity trends by projecting augmented user signals onto the low-frequency eigenspace of $\mathbf{L}$. 
Let $\mathbf{C}_U = \tilde{\mathbf{X}}_U \tilde{\mathbf{X}}_U^T \in \mathbb{R}^{m \times m}$ denote the user-user similarity matrix. 
We construct the augmented signal matrix $\mathbf{X}_b = [\mathbf{C}_U, \mathbf{X}] \in \mathbb{R}^{m \times (m+n)}$ by horizontally concatenating $\mathbf{C}_U$ with $\mathbf{X}$. $\mathbf{C}_U$ does not appear in $\mathbf{L}$. 
We introduce it to restore user-user collaborative signals that the zero block in $\mathbf{A}$ omits.
The low-pass filtered representation is: 
\begin{equation} 
    \tilde{\mathbf{F}}_{\text{LP}} = \mathbf{X}_b \mathbf{D}_b^{-1/2}\mathbf{U}\mathbf{U}^T \mathbf{D}_b^{1/2} 
    \label{eq:lowpass} 
\end{equation} 
Here $\mathbf{D}_b = \text{diag}(\mathbf{X}_b^T \mathbf{1}_m)$. The term $\mathbf{U}\mathbf{U}^T$ projects the augmented signal onto the $r$-dimensional low-frequency eigenspace of $\mathbf{L}$. 
The $\mathbf{D}_b^{1/2}$ term restores the original column scale after filtering, following the convention in GF-CF~\cite{shen2021powerful}. 
Since $\mathbf{X}_b$ contains both user-user and user-item columns, we extract only the item-related columns: 
\begin{equation} 
    \mathbf{F}_{\text{LP}} = \tilde{\mathbf{F}}_{\text{LP}} [:, m:] \in \mathbb{R}^{m \times n} 
    \label{eq:lowpass-extract} 
\end{equation}

\subsubsection{Score Fusion:} 
The final recommendation score fuses bandpass and lowpass signals: 
\begin{equation} 
    \mathbf{Y} = \phi \cdot \mathbf{F}_{\text{BP}} +  (1 - \phi) \cdot \mathbf{F}_{\text{LP}}, \quad  \phi \in [0, 1] 
    \label{eq:fusion} 
\end{equation} 
where $\phi$ is a hyperparameter balancing  personalized and global signals, tuned on the  validation set. Both $\mathbf{F}_{\text{BP}}$ and  $\mathbf{F}_{\text{LP}}$ are linear in the input  signals. 
Their convex combination remains a valid  linear operator, as the space of spectral filters  is closed under addition and scalar  multiplication~\cite{sandryhaila2013discrete, puschel2006algebraic}.
\begin{algorithm}[t]
    \caption{\sellname{}}
    \label{alg:gsprec}
    \KwIn{$\mathcal{D}$, $d$, $\alpha$, $r$, $c$, $w$, $\phi$}
    \KwOut{$\mathbf{Y} \in \mathbb{R}^{m \times n}$}
    \textbf{// Graph Construction} \\
    $\mathbf{X} \gets$ binary user-item matrix from $\mathcal{D}$ \\
    $\mathcal{S}_u \gets$ sorted sequences \\
    $\mathbf{S} \gets$ directed transition matrix from $\{\mathcal{S}_u\}$ \\
    $\mathbf{S}' \gets \text{Symmetrize}(\mathbf{S})$ \\
    $\mathbf{S}^{(d)} \gets \sum_{k=1}^{d}\alpha^{k-1}(\mathbf{S}')^k$ \tcp*[r]{Eq.~\ref{eq:diffusion}}
    $\tilde{\mathbf{S}} \gets \mathbf{D}_S^{-1/2}\mathbf{S}^{(d)}\mathbf{D}_S^{-1/2}$ \tcp*[r]{Eq.~\ref{eq:diffusion-norm}}
    $\mathbf{A} \gets \begin{bmatrix} \mathbf{0} & \mathbf{X} \\ 
    \mathbf{X}^T & \tilde{\mathbf{S}} \end{bmatrix}$ \tcp*[r]{Eq.~\ref{eq:adjacency}}
    $\mathbf{L} \gets \mathbf{I} - \mathbf{D}^{-1/2}\mathbf{A}\mathbf{D}^{-1/2}$ \tcp*[r]{Eq.~\ref{eq:laplacian}}
    $[\mathbf{U}, \boldsymbol{\Lambda}] \gets r$ smallest eigenpairs of $\mathbf{L}$ \\
    $\mathbf{U}_I \gets \mathbf{U}[m{:}, :]$ \\
    \textbf{// Bandpass Filtering} \\
    $\bar{\boldsymbol{\lambda}} \gets (\operatorname{diag}(\boldsymbol{\Lambda}) - \lambda_{\min}\mathbf{1}_r)/(\lambda_{\max} - \lambda_{\min})$ \tcp*[r]{Eq.~\ref{eq:bp-kernel}}
    $\mathbf{G}_{\text{BP}} \gets \text{diag}\!\left(\exp\!\left(
    -(\bar{\boldsymbol{\lambda}}-c)^2/w\right)\right)$ \\
    $\mathbf{F}_{\text{BP}} \gets \mathbf{X}\mathbf{D}_I^{-1/2}
    \mathbf{U}_I\mathbf{G}_{\text{BP}}\mathbf{U}_I^T\mathbf{D}_I^{-1/2}$ \tcp*[r]{Eq.~\ref{eq:bandpass}}
    \textbf{// Low-Pass Filtering} \\
    $\mathbf{C}_U \gets \tilde{\mathbf{X}}_U\tilde{\mathbf{X}}_U^T$;\quad
    $\mathbf{X}_b \gets [\mathbf{C}_U, \mathbf{X}]$ \\
    $\tilde{\mathbf{F}}_{\text{LP}} \gets \mathbf{X}_b\mathbf{D}_b^{-1/2}
    \mathbf{U}\mathbf{U}^T\mathbf{D}_b^{1/2}$ \tcp*[r]{Eq.~\ref{eq:lowpass}}
    $\mathbf{F}_{\text{LP}} \gets \tilde{\mathbf{F}}_{\text{LP}}[:, m:]$ \tcp*[r]{Eq.~\ref{eq:lowpass-extract}}
    \textbf{// Fusion} \\
    $\mathbf{Y} \gets \phi \cdot \mathbf{F}_{\text{BP}} + 
    (1-\phi) \cdot \mathbf{F}_{\text{LP}}$ \tcp*[r]{Eq.~\ref{eq:fusion}}
    \Return{$\mathbf{Y}$}
\end{algorithm}
\section{Experiments}
\label{sec:baselines}
We evaluate \sellname{} to answer three questions: 
\textbf{RQ1:} Does \sellname{} outperform 
state-of-the-art CF methods? \textbf{RQ2:} Does 
intermediate-frequency bandpass filtering 
outperform low-pass and high-pass alternatives? 
\textbf{RQ3:} Does incorporating item-item proximity from user interaction ordering provide additional gains beyond bandpass filtering alone?

\subsection{Experimental Setup}
\label{sec:setup}
\subsubsection{Datasets}
We evaluate on four real-world 
datasets spanning interaction densities from 
0.07\% to 6.30\% and scales from 100K to 5.7M 
interactions: ML100K, 
ML1M~\cite{harper2015movielens}, 
Netflix~\cite{bennett2007netflix}, and Amazon 
Beauty~\cite{ni2019justifying}. Statistics are in 
Table~\ref{tab:datasets}. We use an 8:1:1 split for 
train/validation/test following prior 
work~\cite{xia2024hierarchical,xia2024frequency}.
\begin{table}[t]
    \centering
    \caption{Statistics of the four evaluation datasets.}
    \label{tab:datasets}
    \small
    \begin{tabular*}{.8\columnwidth}{@{\extracolsep{\fill}}lrrrr}
        \toprule
        \textbf{Dataset} & \textbf{\#Users} & 
        \textbf{\#Items} & \textbf{\#Inter.} & 
        \textbf{Density} \\
        \midrule
        ML100K & 943 & 1,682 & 100,000 & 6.30\% \\
        ML1M & 6,040 & 3,706 & 1,000,209 & 4.47\% \\
        Netflix & 20,000 & 17,720 & 5,678,654 & 1.60\% \\
        Beauty & 22,363 & 12,101 & 198,502 & 0.07\% \\
        \bottomrule
    \end{tabular*}
    \vspace{-10pt}
\end{table}

\subsubsection{Baselines} 
We compare against \textbf{Popularity} as a baseline without personalization. GCN-based methods: \textbf{LightGCN}~\cite{he2020lightgcn} performs neighborhood aggregation without feature transformation; \textbf{LR-GCCF}~\cite{chen2020revisiting} adds residual connections to graph convolutions; \textbf{IMP-GCN}~\cite{liu2021interest} aggregates within user sub-graphs to reduce noise; \textbf{SimpleX}~\cite{mao2021simplex} optimizes loss functions and negative sampling; \textbf{UltraGCN}~\cite{mao2021ultragcn} approximates infinite-layer graph convolutions. GSP-based methods: \textbf{GF-CF}~\cite{shen2021powerful} applies unified low-pass spectral filtering; \textbf{JGCF}~\cite{guo2023manipulating} amplifies high-frequency signals via Jacobi polynomial filters; \textbf{PGSP}~\cite{liu2023personalized} applies mixed-frequency low-pass filtering over personalized signals; \textbf{HiGSP}~\cite{xia2024hierarchical} designs hierarchical cluster-wise and global filters; \textbf{FaGSP}~\cite{xia2024frequency} combines cascaded and parallel filter modules.
We exclude sequential recommenders because they target next-item prediction, which differs from top-$N$ collaborative filtering in both task formulation and evaluation protocol. \sellname{} uses ordering only at graph construction as shown in Figure~\ref{fig:diffusion-wrap}, and Table~\ref{tab:ablation} isolates this signal's contribution.

\subsubsection{Implementation Details:} 
We fix diffusion depth $d=2$ and decay $\alpha=0.4$ across all datasets, set eigenspace dimensionality $r$ proportional to dataset size, optimize filter parameters $c$ and $w$ via 2D grid search, and tune fusion weight $\phi$ on the validation set. 
Optimal values per dataset are in Table~\ref{tab:filters_summary}. Baselines use official implementations. 
We report results for LR-GCCF, IMP-GCN, SimpleX, UltraGCN, and GF-CF from~\cite{xia2024hierarchical,xia2024frequency}, which use identical preprocessing and evaluation protocols.
\begin{table}[t]
\centering
\caption{Optimal filter parameters per dataset.}
\label{tab:filters_summary}
    \begin{tabular*}{.8\columnwidth}{@{\extracolsep{\fill}}lrrrr}
        \toprule
        \textbf{Dataset} & $\phi$ & $c$ & $w$ & $r$ \\
        \midrule
        ML100K & 0.5 & 0.2 & 0.1 & 32 \\
        ML1M   & 0.3 & 0.8 & 0.1 & 128 \\
        Netflix & 0.3 & 0.4 & 0.3 & 256 \\
        Beauty  & 0.5 & 0.8 & 0.3 & 512 \\
        \bottomrule
    \end{tabular*}
\end{table}

\subsubsection{Metrics:} 
Following~\cite{he2020lightgcn,shen2021powerful, xia2024frequency}, we report NDCG@$k$ and  MRR@$k$ at $k \in \{5,10,20\}$, abbreviated  N@$k$ and M@$k$. 
We exclude previously interacted items from ranking.

\subsection{Overall Performance}

\begin{table*}[t]
\centering
\small
% \caption{Performance comparison on four public datasets. The best performance is denoted in \textbf{bold}.
% % ``M@K'' and ``N@K'' are short for ``MRR@K'' and ``NDCG@K'', respectively. 
% ``Improv.'' denotes the percentage improvement of \sellname{} compared to the strongest baseline method.}
\caption{Performance comparison on four public datasets. The best performance is denoted in \textbf{bold}. ``M@K'' and ``N@K'' are short for ``MRR@K'' and ``NDCG@K'', respectively. ``Improv.'' denotes the percentage improvement of \sellname{} compared to the strongest baseline method.}
\vspace{-5pt}
\label{tab:main_results}
\resizebox{\textwidth}{!}{
\begin{tabular}{llrrrrrrrrrrrrr}
\toprule
\multirow{2.5}{*}{\textbf{Datasets}} & \multirow{2.5}{*}{\textbf{Metric}} & \multicolumn{1}{c}{\textbf{}} & \multicolumn{5}{c}{\textbf{GCN-based Methods}} & \multicolumn{6}{c}{\textbf{GSP-based Methods}} & \multirow{2.5}{*}{\textbf{Improv.}} \\
 \cmidrule(lr){3-3} \cmidrule(lr){4-8} \cmidrule(lr){9-14}
& & Popularity & LightGCN & LR-GCCF & IMP-GCN & SimpleX & UltraGCN & GF-CF & JGCF & PGSP & FaGSP & HiGSP & \sellname{} & \\
\midrule
\midrule
\multirow{6}{*}{ML100K} 
& N@5 & 0.5672 & 0.7034 & 0.5587 & 0.6764 & 0.6995 & 0.6786 & 0.6875 & 0.6582 & 0.6851 & 0.7106 & \underline{0.7166} & \textbf{0.7543} & +5.26\% \\
& M@5 & 0.5151 & 0.6362 & 0.4943 & 0.6006 & 0.6397 & 0.6094 & 0.6234 & 0.4471 & 0.6149 & 0.6518 & \underline{0.6629} & \textbf{0.7358} & +11.00\% \\
& N@10 & 0.5857 & 0.6771 & 0.5603 & 0.6605 & 0.6866 & 0.6688 & 0.6843 & 0.6695 & 0.6722 & \underline{0.7092} & 0.7021 & \textbf{0.7572} & +6.77\% \\
& M@10 & 0.5296 & 0.5877 & 0.4616 & 0.5690 & 0.6064 & 0.5749 & 0.6010 & 0.4222 & 0.5767 & \underline{0.6384} & 0.6380 & \textbf{0.7443} & +16.59\% \\
& N@20 & 0.5867 & 0.6618 & 0.5555 & 0.6605 & 0.6543 & 0.6486 & 0.6621 & 0.6730 & 0.6495 & \underline{0.6989} & 0.6724 & \textbf{0.7465} & +6.81\% \\
& M@20 & 0.5315 & 0.5652 & 0.4411 & 0.5690 & 0.5411 & 0.5289 & 0.5593 & 0.4197 & 0.5496 & \underline{0.5901} & 0.5782 & \textbf{0.7457} & +26.37\% \\
\midrule
\multirow{6}{*}{ML1M} 
& N@5 & 0.0958 & 0.5845 & 0.3540 & 0.5583 & 0.5934 & 0.5739 & 0.5935 & 0.6121 & 0.5963 & \underline{0.6112} & 0.6062 & \textbf{0.6237} & +2.05\% \\
& M@5 & 0.1943 & 0.5160 & 0.2997 & 0.4934 & 0.5244 & 0.5087 & 0.5254 & 0.4172 & 0.5313 & \underline{0.5430} & 0.5386 & \textbf{0.5715} & +5.25\% \\
& N@10 & 0.0960 & 0.5873 & 0.3787 & 0.5594 & 0.5921 & 0.5773 & 0.5897 & \underline{0.6238} & 0.5923 & 0.6082 & 0.6042 & \textbf{0.6431} & +3.10\% \\
& M@10 & 0.2180 & 0.5010 & 0.2946 & 0.4685 & 0.5051 & 0.4886 & 0.4996 & 0.3869 & 0.5063 & \underline{0.5229} & 0.5148 & \textbf{0.5837} & +11.63\% \\
& N@20 & 0.1035 & 0.5692 & 0.3980 & 0.5416 & 0.5687 & 0.5600 & 0.5678 & \underline{0.6270} & 0.5687 & 0.5808 & 0.5769 & \textbf{0.6431} & +2.57\% \\
& M@20 & 0.2309 & 0.4644 & 0.2924 & 0.4292 & 0.4625 & 0.4500 & 0.4557 & 0.3730 & 0.4646 & \underline{0.4748} & 0.4620 & \textbf{0.5871} & +23.65\% \\
\midrule
\multirow{6}{*}{Netflix} 
& N@5 & 0.5105 & 0.6822 & 0.6023 & 0.6328 & 0.6603 & 0.5680 & 0.7005 & 0.7065 & 0.6811 & \underline{0.7162} & 0.7162 & \textbf{0.7201} & +0.54\% \\
& M@5 & 0.4505 & 0.6147 & 0.5323 & 0.5528 & 0.5902 & 0.4941 & 0.6360 & 0.5283 & 0.6127 & 0.6524 & \underline{0.6532} & \textbf{0.6680} & +2.27\% \\
& N@10 & 0.5488 & 0.6814 & 0.6134 & 0.6307 & 0.6694 & 0.5746 & 0.6928 & \underline{0.7148} & 0.6756 & 0.7079 & 0.7062 & \textbf{0.7270} & +1.71\% \\
& M@10 & 0.4712 & 0.5974 & 0.5234 & 0.5386 & 0.5839 & 0.4708 & 0.6126 & 0.4837 & 0.5905 & \underline{0.6293} & 0.6275 & \textbf{0.6762} & +7.45\% \\
& N@20 & 0.5604 & 0.6642 & 0.6043 & 0.6152 & 0.6572 & 0.5580 & 0.6689 & \underline{0.7164} & 0.6543 & 0.6814 & 0.6800 & \textbf{0.7213} & +0.68\% \\
& M@20 & 0.4773 & 0.5636 & 0.4963 & 0.4995 & 0.5555 & 0.4309 & 0.5691 & 0.4273 & 0.5502 & \underline{0.5827} & 0.5800 & \textbf{0.6777} & +16.30\% \\
\midrule
\multirow{6}{*}{Beauty} 
& N@5 & 0.0149 & 0.0668 & 0.0533 & 0.0570 & 0.0682 & 0.0599 & 0.0659 & 0.0501 & 0.0674 & 0.0712 & \underline{0.0718} & \textbf{0.0733} & +2.09\% \\
& M@5 & 0.0118 & 0.0530 & 0.0423 & 0.0444 & 0.0544 & 0.0476 & 0.0523 & 0.0535 & 0.0538 & 0.0564 & \underline{0.0574} & \textbf{0.0619} & +7.84\% \\
& N@10 & 0.0200 & 0.0763 & 0.0635 & 0.0667 & 0.0761 & 0.0649 & 0.0751 & 0.0612 & 0.0765 & \underline{0.0799} & 0.0771 & \textbf{0.0879} & +10.01\% \\
& M@10 & 0.0139 & 0.0519 & 0.0443 & 0.0451 & 0.0517 & 0.0441 & 0.0517 & \underline{0.0590} & 0.0525 & 0.0549 & 0.0531 & \textbf{0.0680} & +15.25\% \\
& N@20 & 0.0256 & 0.0829 & 0.0678 & 0.0729 & 0.0819 & 0.0689 & 0.0819 & 0.0721 & 0.0828 & \underline{0.0846} & 0.0799 & \textbf{0.1013} & +19.74\% \\
& M@20 & 0.0154 & 0.0466 & 0.0388 & 0.0409 & 0.0461 & 0.0393 & 0.0469 & \underline{0.0622} & 0.0481 & 0.0485 & 0.0456 & \textbf{0.0718} & +15.43\% \\
\bottomrule
\end{tabular}
}
% \vspace{-5pt}
\end{table*}
Table~\ref{tab:main_results} shows that \sellname{} consistently outperforms all baselines across all four datasets and all metrics, answering RQ1. Improvements over low-pass GSP methods~\cite{shen2021powerful,liu2023personalized, xia2024frequency,xia2024hierarchical} are consistent with the hypothesis that intermediate-frequency components carry stronger personalization signals than low-frequency popularity trends. 
Improvements over JGCF~\cite{guo2023manipulating}, the only high-frequency baseline, confirm that intermediate-frequency filtering outperforms high-frequency amplification for collaborative filtering.

Performance gains vary inversely with per-user interaction volume. Netflix, whose users have the longest interaction histories, shows the smallest gains (0.54\%--1.71\% on N@$k$). The result suggests that abundant per-user signal is already well-represented in the low-frequency band. In contrast, Beauty, whose users have the shortest histories, shows the largest gains. Limited per-user signal suppresses low-frequency components, so bandpass filtering recovers what low-pass filtering misses. The trend across datasets is consistent with bandpass filtering addressing the failure mode of low-pass methods in sparse settings.

The dataset-specific optimal frequency centers  in Table~\ref{tab:filters_summary} reflect  underlying spectral structure differences.  ML100K concentrates personalization signals at lower  frequencies ($c=0.2$). Beauty and ML1M, with  sparser interactions, require higher bandpass  centers ($c=0.8$) to target the shifted  intermediate-frequency band.
The systematic variation confirms that \sellname{} adapts to  dataset spectral structure rather than imposing  a fixed frequency range.
\subsection{Ablation Analysis}
\begin{table}[t]
    \centering
    \caption{Ablation study on ML1M.}
    \vspace{-10pt}
    \label{tab:ablation}
    \small
    \begin{tabular*}{.8\columnwidth}{@{\extracolsep{\fill}}lrrrr}
        \toprule
        \textbf{Variant} & \textbf{N@10} & $\pm$\textbf{SE} & 
        \textbf{M@10} & $\pm$\textbf{SE} \\
        \midrule
        GSPRec-NB & 0.5769 & 0.0082 & 0.5148 & 0.0093 \\
        GSPRec-NL & 0.6042 & 0.0079 & 0.5229 & 0.0088 \\
        GSPRec-NS & 0.6274 & 0.0039 & 0.5704 & 0.0050 \\
        GSPRec-SE & 0.6416 & 0.0037 & 0.5831 & 0.0049 \\
        \midrule
        \textbf{GSPRec} & \textbf{0.6431} & 0.0074 & 
        \textbf{0.5837} & 0.0081 \\
        \bottomrule
    \end{tabular*}
    \vspace{-10pt}
\end{table}
\begin{figure*}[t]
    \centering
    \begin{subfigure}{\linewidth}
        \centering
        \includegraphics[width=.9\linewidth,keepaspectratio]{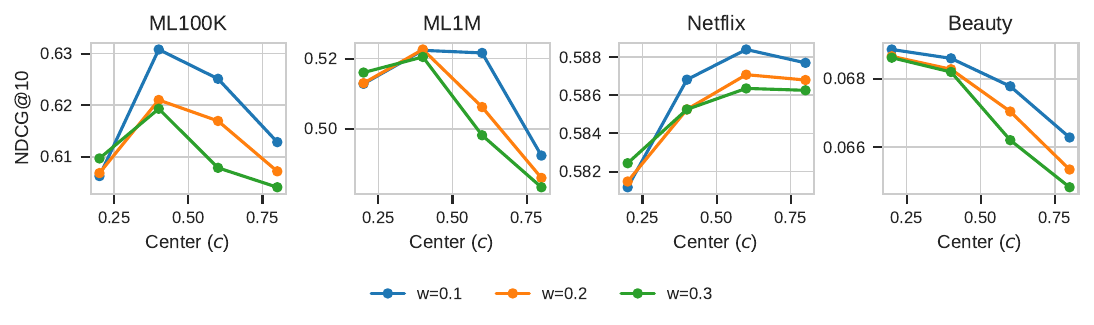}
        \Description{Line plots of NDCG at 10 versus bandpass center c for four datasets, with one line per filter width w. Each curve is flat near its optimum, and the best center differs across datasets.} 
        \vspace{-5pt}
        \caption{Effect of bandpass center position ($c$) on NDCG@10 across datasets. Each line represents a different filter bandwidth ($w$).}
        \label{fig:center_sensitivity}
    \end{subfigure}
    \vspace{5pt}
    \begin{subfigure}{\linewidth}
        \centering
        \includegraphics[width=.9\linewidth,keepaspectratio]{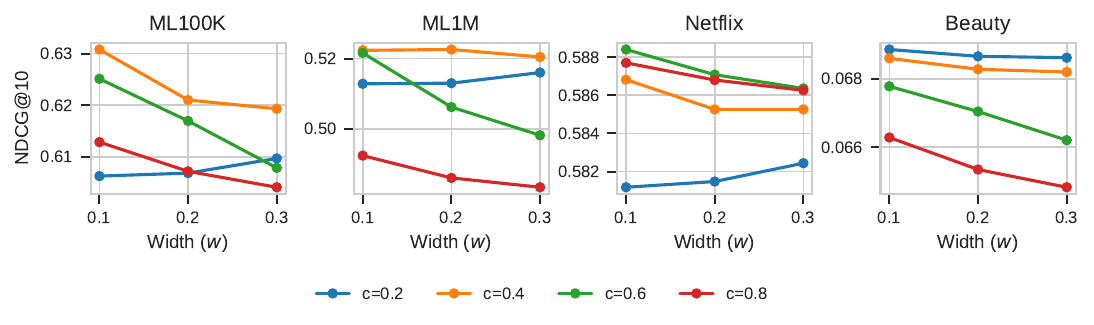}
        \Description{Line plots of NDCG at 10 versus bandpass width w for four datasets, with one line per center position c. Performance varies moderately with width and peaks at dataset-specific settings.}  
        \vspace{-5pt}
        \caption{Effect of bandpass width ($w$) on NDCG@10 across datasets. Each line represents a different center position ($c$).}
        \label{fig:width_sensitivity}
    \end{subfigure}
    \vspace{-10pt}
    \caption{Parameter sensitivity analysis of the Gaussian bandpass filter.}
    \label{fig:sensitivity}
\end{figure*}
Table~\ref{tab:ablation} presents four ablation variants: \textbf{GSPRec-NB} removes the bandpass filter (low-pass only); \textbf{GSPRec-NL} removes the low-pass filter (bandpass only); \textbf{GSPRec-NS} removes item-item proximity from user interaction ordering ($\tilde{\mathbf{S}} = \mathbf{0}$); \textbf{GSPRec-SE} uses direct item-item edges without diffusion ($d=1$, equivalent to $\mathbf{S}'$ only).
Removing the bandpass filter causes the largest performance drop (N@10 falls from $0.6431$ to $0.5769$, $-11.3\%$), answering RQ2. Intermediate-frequency filtering is the primary performance driver. 
GSPRec-NB falls below all GSP baselines in Table~\ref{tab:main_results} because item-item proximity introduces intermediate-frequency structure that the low-pass filter suppresses. Without the complementary bandpass filter, the added structure harms performance.
GSPRec-NS surpasses all GSP baselines, establishing the bandpass filter as a strong independent contribution and answering RQ3. Incorporating item-item proximity adds a further 2.4\% on N@10. The enriched item representations expose structure that the bandpass filter exploits.
The two components are complementary. Neither is sufficient alone and their combination achieves the strongest results. Table~\ref{tab:filters} isolates the filtering contribution. Bandpass alone outperforms low-pass alone by 4.7\% on N@10. The optimal fusion weight $\phi=0.3$ on ML1M indicates that the bandpass component dominates recommendation quality while the low-pass component contributes complementary global structure. 
The depth $d$ keeps diffusion local. Deeper diffusion densifies $\tilde{\mathbf{S}}$ toward uninformative global co-occurrence. The decay $\alpha=0.4$ down-weights two-hop proximity relative to direct closeness. A single setting performs well on all four datasets, which suggests the item-item proximity structure is not dataset-specific.
The improvement from GSPRec-SE to GSPRec confirms that multi-hop diffusion captures meaningful higher-order proximity beyond direct transitions. Eigenspace dimensionality $r$ scales with dataset size, reflecting the richer spectral structure of larger graphs.
\begin{table}[t]
    \centering
    \caption{Filter comparison on ML1M.}
    \vspace{-10pt}
    \label{tab:filters}
    \small
    \begin{tabular*}{.8\columnwidth}{@{\extracolsep{\fill}}lrrrr}
        \toprule
        \textbf{Components} & \textbf{N@10} & 
        $\pm$\textbf{SE} & \textbf{M@10} & 
        $\pm$\textbf{SE} \\
        \midrule
        LP only & 0.5769 & 0.0082 & 0.5148 & 0.0093 \\
        BP only & 0.6042 & 0.0079 & 0.5229 & 0.0088 \\
        LP+BP ($\phi$=0.3) & \textbf{0.6431} & 0.0074 & 
        \textbf{0.5837} & 0.0081 \\
        LP+BP ($\phi$=0.5) & 0.6327 & 0.0076 & 
        0.5714 & 0.0085 \\
        LP+BP ($\phi$=0.7) & 0.6281 & 0.0078 & 
        0.5682 & 0.0087 \\
        \bottomrule
    \end{tabular*}
    \vspace{-10pt}
\end{table}

\subsection{Parameter Sensitivity}
Figures~\ref{fig:center_sensitivity}  and~\ref{fig:width_sensitivity} show sensitivity  curves for $c$ and $w$ respectively. Since $c$  and $w$ interact, varying one while fixing the other does not reproduce the joint optimum. A wider bandwidth at a suboptimal center amplifies noise rather than signal. 
The curves are flat  around the optimum, confirming that \sellname{}  is robust to moderate parameter deviations and  does not require precise hyperparameter tuning.

\subsection{Computational Analysis}
Table~\ref{tab:runtime} shows that \sellname{}  requires 1.27 minutes on ML1M, faster than all  GSP methods except GF-CF (1.18 min) and over  $100\times$ faster than LightGCN (135.96 min).  \sellname{} is the only method that achieves state-of-the-art accuracy at near-GF-CF efficiency.
\sellname{} computes scores directly by spectral filtering rather than learning embeddings for approximate nearest-neighbor top-$K$ retrieval. 
Computing one user's scores requires $O(rn)$ operations on cached eigenvectors, and \sellname{} never stores the full $m \times n$ score matrix.
\begin{table}[t]
    \centering
    \caption{Runtime comparison on ML1M (minutes).}
    \vspace{-10pt}
    \label{tab:runtime}
    \small
    \begin{tabular*}{.7\columnwidth}{@{\extracolsep{\fill}}lr}
        \toprule
        \textbf{Method} & \textbf{Runtime (min)} \\
        \midrule
        LightGCN~\cite{he2020lightgcn} & 135.96 \\
        SimpleX~\cite{mao2021simplex} & 109.83 \\
        UltraGCN~\cite{mao2021ultragcn} & 16.09 \\
        GF-CF~\cite{shen2021powerful} & 1.18 \\
        PGSP~\cite{liu2023personalized} & 2.36 \\
        FaGSP~\cite{xia2024frequency} & 2.08 \\
        HiGSP~\cite{xia2024hierarchical} & 2.04 \\
        \midrule
        \textbf{GSPRec (Ours)} & \textbf{1.27} \\
        \bottomrule
    \end{tabular*}
\end{table}
\section{Conclusion}
Prior GSP-based CF methods derive item representations from the interaction matrix alone and filter them with low-pass designs, discarding the intermediate-frequency band where community-level preferences reside.
\sellname{} incorporates ordering-derived item-item proximity into the graph topology.
The enriched graph exposes intermediate-frequency structure that a Gaussian bandpass filter selectively amplifies.
A complementary low-pass filter retains popularity trends.
\sellname{} outperforms all graph CF baselines across four datasets.
Ablations show that graph construction and filter design are coupled, since item-item proximity without the bandpass filter degrades performance.
Future directions include adapting the spectral band under distribution shift and approximating the filters with polynomials for industrial-scale deployment.

% \begin{acks}
%  ack section
% \end{acks}

\bibliographystyle{ACM-Reference-Format}
\bibliography{references}

\end{document}